\numberwithin{equation}{section}
\newtheorem{theorem}{Theorem}[section]
\newtheorem{proposition}[theorem]{Proposition}
\newtheorem{lemma}[theorem]{Lemma}
\theoremstyle{definition}
\theoremstyle{remark}
\begin{document}

% --- Title & author block (from IJGT resubmission) ---
\title{Public Communication in Regime Change games\thanks{We are grateful to the editor and the two anonymous referees for their helpful comments. We thank participants of TSE MACRO group and Bordeaux Economics Workshop for valuable feedback. We also thank Evgeny Andreev and Alla Friedman for their feedback on an earlier draft. All remaining errors are our own.}}

\author{Georgy Lukyanov\footnote{Toulouse School of Economics, 1 Esplanade de l'Universit\'{e}, Toulouse, 31000, France. Email: Georgy.Lukyanov@tse-fr.eu} \and Anastasia Makhmudova\footnote{London Business School, The Regent's Park, London NW1 4SA, London, Great Britain.}}

\date{}

\maketitle

\begin{abstract}
We study a regime change game in which the state and an opposition leader both observe the regime’s true strength and may engage in costly communication by manipulating the mean of citizens’ private signals. Each citizen then decides whether to attack the regime; the citizens take their private signals at face value and do not discount or attempt to undo propaganda or counter-propaganda. From the perspective of both the state and the opposition, the size of the attack is uncertain, as the number of committed partisans—those who always attack regardless of their signal—is not observed in advance. We show that a regime on the brink of collapse optimally refrains from propaganda, while the opposition engages in counter-propaganda. The equilibrium level of counter-propaganda increases with the opposition’s benefit-cost ratio and with the precision of citizens’ private signals, and decreases with the cost of attacking.
\end{abstract}

\noindent \textbf{Keywords:} Global games, Signalling, Policymaker, Information asymmetry, Coordinated attack, Regime change

\noindent \textbf{JEL Classification:} D82, D84, E58, F31

\section{Introduction}

In the information age, propaganda plays an increasingly important role in shaping mass behavior. A striking example is the “Arab Spring,” where social media significantly contributed to the coordination of protests and the eventual collapse of regimes.\footnote{See the discussion in \cite{Edmond2013}.} In such settings, the regime may seek to deter collective action by projecting strength—convincing citizens that protest is futile. This is typically achieved by disseminating signals that suggest the regime is resilient and unlikely to fall.

In contrast, the role of an opposition leader may be to undermine this narrative. By casting doubt on the regime's strength and credibility, the opposition can shift citizens' beliefs and increase the likelihood of protest. The opposition's communication strategy thus functions as counter-propaganda, strategically aimed at encouraging collective action.

In this paper, we formalize these dynamics in a stylized global games framework following \cite{MorrisShin2003}. We consider a model with three types of agents: the regime, the opposition leader, and a continuum of citizens. At the start of the game, both the regime and the opposition observe the true strength of the regime, then simultaneously choose their levels of propaganda and counter-propaganda. Citizens subsequently receive private signals that are correlated with the regime's strength and are influenced positively by the regime's communication and negatively by the opposition's counter-message. Then the citizens decide simultaneously and independently whether to attack, based solely on these signals.

Importantly, we assume that citizens do not rationally infer the sources of bias in their signals; they interpret them at face value. This assumption captures a setting where media manipulation is not recognized or not accounted for—either due to polarization, limited attention, or information overload. Moreover, we introduce an additional source of uncertainty: a measure 
$\alpha\in[0,1]$ of partisans in the population always attacks the regime, regardless of the signal received. The actual size of this group is unknown to the regime and the opposition at the time they choose their communication strategies, which introduces strategic uncertainty.

We begin by analyzing a benchmark case where both propaganda and counter-propaganda are prohibitively costly. In this setting, we characterize the unique equilibrium as a pair of thresholds: one for the citizens' private signal and one for the regime's strength. Citizens attack the regime if their signal falls below the signal threshold, and the regime chooses to abandon power if its strength falls below the strength threshold. We show that the presence of partisans raises both thresholds, making citizen behavior more aggressive and increasing the likelihood of regime collapse (Proposition~\ref{prop:comparative}).

We then extend the model to a more realistic case in which both forms of communication are costly but feasible, with quadratic costs. In this setup, propaganda efforts are only observed in a region of intermediate regime strength. When the regime is either extremely strong or extremely weak, neither side has an incentive to expend resources on persuasion, as the outcome is essentially predetermined (Proposition~\ref{prop:dominance}).

For regimes near the point of collapse, however, communication strategies become crucial. In equilibrium, we show that the regime chooses not to engage in propaganda, as the marginal benefit of doing so vanishes at the tipping point. By contrast, the opposition leader increases counter-propaganda efforts, as the expected gain from toppling the regime outweighs the cost (Proposition~\ref{prop:main}). We derive closed-form expressions for the equilibrium levels of propaganda and counter-propaganda, as well as for the associated thresholds.

Finally, we conduct comparative statics (Proposition~\ref{prop:compstateq}) and show that the level of counter-propaganda increases with the benefit-cost ratio of the opposition leader and the precision of citizens' private signals, while it decreases with the cost of attacking. These findings highlight the strategic complementarities between information precision and political communication in regime change environments.

\subsection{Literature review}

The concept of global games was originally introduced by \cite{CarlssonvanDamme1993} in the context of coordination games with incomplete information. They showed that small perturbations in the information structure—specifically, the introduction of idiosyncratic private signals—can eliminate multiplicity and lead to a unique equilibrium. This insight was extended by \cite{MorrisShin1998} in their seminal model of currency attacks and has since been applied to a wide array of economic and political settings, including bank runs \citep{GoldsteinPauzner2005, RochetVives2004} and sovereign debt crises \citep{MorrisShin2004}.

Our paper is most closely related to two strands of the literature: regime change models with informational frictions and models of strategic communication in coordination games.

Among the former, \cite{Edmond2013} studies the role of state propaganda in suppressing collective action. In his model, rational citizens understand that the regime has incentives to bias information but may still be influenced by propaganda, especially if they cannot fully disentangle its source. However, Edmond does not model a countervailing force—such as an opposition leader—who seeks to challenge the regime's message. In contrast, we explicitly incorporate this second actor and examine how two-sided communication affects equilibrium outcomes.

\cite{deMesquita2010} also studies regime change as a coordination problem, but focuses on the signaling role of violent protest by revolutionary vanguards. In his model, small-scale violence serves as a costly signal of regime vulnerability, which may trigger broader mobilization. Our model, by contrast, emphasizes informational manipulation rather than physical confrontation. We allow both sides to engage in strategic communication, which influences citizens’ beliefs directly through perceived changes in fundamentals.

On the communication side, our work relates to \cite{Angeletosetal2006}, who examine how policy choices by the regime can signal information about underlying fundamentals. Their framework highlights how public actions influence beliefs and thereby affect the likelihood of a coordinated attack. We adopt a similar idea but introduce a second strategic communicator—namely, the opposition leader—who pulls beliefs in the opposite direction. This two-sided structure introduces novel interactions and makes the threshold behavior of citizens more complex.

Finally, our model connects to \cite{Corsettietal2004}, who study speculative attacks in currency markets involving both a continuum of small speculators and a large player. In our setting, the presence of committed partisans—whose number is unknown ex ante—plays a role analogous to that of the large player. These actors create an endogenous threshold for regime collapse, and their unobserved presence adds uncertainty for the regime and the opposition leader.

A key distinction of our work is the assumption that citizens are naive in interpreting their private signals. Unlike models in which agents rationally discount information from biased sources (e.g., \citealp{Edmond2013}), we assume citizens take their signals at face value. This provides a useful benchmark that captures environments in which media manipulation goes unrecognized. While potentially overstating the effectiveness of propaganda, this assumption allows us to isolate the role of strategic communication under informational naivete and explore the interaction between two competing narratives.

In sum, our main contribution is to integrate both regime propaganda and opposition counter-propaganda into a global-games framework with uncertainty over activist presence. We show how these competing efforts jointly shape citizens’ incentives to coordinate on regime change and offer new insights into the dynamics of political messaging and persuasion under strategic uncertainty.

\indent

The rest of the paper is organized as follows. Section \ref{sec:model} describes the model. Section \ref{sec:benchmark} considers the benchmark case in which propaganda is infinitely costly. The main analysis is contained in Section \ref{sec:main}, where we characterise an equilibrium with propaganda and counter-propaganda. Section \ref{sec:comp} discusses some comparative static results. Section \ref{sec:conclusion} concludes.

\section{\label{sec:model}The Model}
We consider a game with three types of players: the regime (R), the opposition leader (O), and a continuum of citizens indexed by $i \in [0,1]$. The core state variable is the strength of the regime, denoted by $\theta \in \mathbb{R}$, which represents the regime’s ability to resist a mass attack. This variable is drawn from an improper uniform distribution, $\theta \sim U(\mathbb{R})$,
\footnote{This is a standard simplification in the global games literature; it can be viewed as the limiting case of a normal prior $\theta \sim \mathcal{N}(0,\delta^{-1})$ as the precision $\delta \to 0$.}
and is observed by both the regime and the opposition leader.

Following this, the regime selects a level of propaganda $y \in \mathbb{R}$ at a cost $\Gamma(y)$, while the opposition simultaneously chooses a level of counter-propaganda $z \in \mathbb{R}$ at cost $\Delta(z)$. These choices influence the information environment faced by citizens.

Each citizen $i$ receives a private signal:
\begin{equation}
    x_i = \theta + y - z + \varepsilon_i,
\end{equation}
where $\varepsilon_i \sim \mathcal{N}(0, \beta^{-1})$ are i.i.d. across individuals. The parameter $\beta > 0$ reflects the precision of these private signals. 

Importantly, citizens are assumed to be \emph{naive}: they treat the distribution of their signals as if $x_i \sim \mathcal{N}(\theta, \beta^{-1})$, unaware of the endogenous influence of $y$ and $z$. This reflects a behavioral assumption consistent with limited awareness of media manipulation.

A key feature of the model is the presence of \emph{partisans}: a measure $\alpha \in [0,1]$ of citizens are committed to attacking the regime, regardless of their signal. The remaining $1 - \alpha$ citizens are strategic and choose whether to attack based on their signal $x_i$. While $\alpha$ is common knowledge among citizens at the time of their decision, it is not known to the regime or the opposition at the time they choose $y$ and $z$. From their perspective, $\alpha \sim U[0,1]$.

Let $a_i \in \{0,1\}$ denote the action of citizen $i$, with $a_i = 1$ indicating an attack. The total mass of attackers is then:
\begin{equation}
    A = \int_0^1 a_i \, di.
\end{equation}

After observing the realized attack $A$, the regime chooses whether to defend ($D = 1$) or abandon ($D = 0$) the status quo. The regime’s utility is given by:
\begin{equation}
    U_R = D(\theta - A) - \Gamma(y).
\end{equation}
That is, if the regime defends, it gains net utility $\theta - A$, reflecting its strength minus the attack size. If it abandons power, its utility is $-\Gamma(y)$.

The utility of a strategic citizen is:
\begin{equation}
    U_C = a_i (\mathbb{I}_{\{A \geq \theta\}} - c),
\end{equation}
where $c \in (0,1)$ is the cost of attacking. A citizen only benefits if the regime collapses ($A \geq \theta$). The utility from not attacking is normalized to zero.

The opposition leader receives utility:
\begin{equation}
    U_O = B \cdot \mathbb{I}_{\{A \geq \theta\}} - \Delta(z),
\end{equation}
where $B > 0$ is the gross benefit of regime collapse.

The timing of the game is as follows:
\begin{enumerate}
    \item The regime strength $\theta$ is drawn and observed by R and O.
    \item R and O simultaneously choose $y$ and $z$.
    \item The partisan share $\alpha \sim U[0,1]$ is drawn and revealed to citizens.
    \item Each citizen observes their private signal $x_i$ and chooses $a_i \in \{0,1\}$.
    \item The aggregate attack $A$ is realized.
    \item The regime chooses whether to defend or abandon.
    \item Payoffs are realized.
\end{enumerate}

This model captures a dynamic of two-sided propaganda under uncertainty about citizen commitment. Strategic uncertainty arises both from private signal noise and from the unknown number of committed attackers.

\subsection{Discussion of the Naive-Citizens Assumption}
In our model, citizens do not explicitly consider the possibility that their private signals, \( x_i = \theta + y - z + \varepsilon_i \), might be deliberately manipulated through propaganda or counter-propaganda. Instead, each citizen interprets \( x_i \) at face value, as though it were drawn from \( x_i = \theta + \varepsilon_i \). This behavioral assumption of naivete plays a pivotal role in shaping outcomes.

We adopt this assumption for two key reasons. First, it provides a tractable benchmark that clearly isolates the direct impact of communication on citizens' behavior. When individuals fail to account for strategic bias, even modest levels of propaganda or counter-propaganda can meaningfully shift perceived fundamentals. This, in turn, alters coordination thresholds in a transparent and analytically convenient way. Second, the assumption resonates with empirical realities: in many contexts, individuals consume politically charged information without carefully disentangling its origins—whether due to polarization, media saturation, or simple cognitive overload.

Even in the benchmark case where communication is prohibitively costly and thus absent, the assumption still matters. Citizens rely exclusively on their private signals, and the correlation between \( x_i \) and \( \theta \) drives the uniqueness of equilibrium. Under rational Bayesian inference, individuals would attempt to discount observed signals based on perceived incentives to manipulate them—introducing additional complexity that might obscure the core strategic interactions we wish to highlight.

If citizens were fully rational, they would realize that their signal might be systematically biased by the choices of the regime and the opposition. They would then update based not on \( x_i \), but on \( x_i - y + z \), effectively reversing the distortion. But in doing so, they would need beliefs about how \( y \) and \( z \) are chosen—beliefs that, in our model, would depend on the entire equilibrium structure. Since \( y \) and \( z \) are endogenously determined, and not simple linear functions of \( \theta \), this would make the updating process far less straightforward.

In truth, the assumption of naivete may not be unrealistic. It reflects the behavior of citizens overwhelmed by conflicting narratives, lacking the bandwidth—or the will—to critically assess every informational cue. Think of a detective trying to reconstruct the truth from a trio of witnesses. One is unbiased but vague. The other two are vocal and confident, yet clearly partial—each shouting a version that serves their cause. What is the detective to do? Without knowing which is honest and which is playing a role, he might well lean on the one voice that sounds “neutral,” even if it's just the quietest. Similarly, citizens cling to their private signal, however noisy, because it feels personal—untainted.

Ultimately, naivete captures a certain emotional truth: that in the fog of manipulation, people seek clarity not through deduction, but through trust. And when every messenger seems compromised, they may choose to trust only what they “see with their own eyes.”

\section{Benchmark Case: Infinitely Costly Lying}
\label{sec:benchmark}

We begin with the simplest scenario: both the regime and the opposition leader face infinite costs of communication. Formally, the cost functions \( \Gamma(y) \) and \( \Delta(z) \) are infinite for any nonzero values of \( y \) and \( z \), respectively. As a result, both parties choose not to engage in any communication, and we have \( y = z = 0 \).

In this setting, the strategic behavior of citizens is governed entirely by their private signals. Each non-partisan citizen follows a threshold rule: attack the regime if the signal \( x_i \) falls below a certain threshold \( x^* \). The indifference condition that defines this threshold is:
\begin{equation}
\label{eq:thresh}
    \Phi\left(\sqrt{\beta}(\theta^* - x^*)\right) = c,
\end{equation}
where \( \Phi(\cdot) \) is the standard normal cumulative distribution function, \( \theta^* \) is the regime strength below which the regime collapses, and \( c \in (0,1) \) is the cost of attacking. 

The intuition is straightforward. A citizen with signal \( x^* \) expects the regime to fall with probability \( \Phi(\sqrt{\beta}(\theta^* - x^*)) \). Since the benefit of attacking is 1 (if successful), the expected payoff from attacking equals this probability. Indifference arises when this expected benefit equals the cost of attack, \( c \).

Due to the positive correlation between \( x_i \) and \( \theta \), citizens with signals above \( x^* \) are less likely to believe the regime will fall and thus refrain from attacking. Conversely, citizens with \( x_i < x^* \) believe collapse is more likely and choose to attack. Thus, \( x^* \) partitions the population’s behavior.

Turning to the regime’s decision, recall that the regime defends the status quo (chooses \( D = 1 \)) if and only if its strength \( \theta \) exceeds the total attack level \( A \). Since a measure \( \alpha \) of partisans attacks regardless of signals, and a share \( 1 - \alpha \) of strategic citizens attacks only if \( x_i \leq x^* \), the total attack is:
\begin{equation}
    A = \alpha + (1 - \alpha) \cdot \Phi(\sqrt{\beta}(x^* - \theta)).
\end{equation}

In equilibrium, the regime is indifferent between defending and abandoning when \( \theta = A \). Plugging this into the expression above and evaluating at \( \theta = \theta^* \), we obtain:
\begin{equation}
\label{eq:theta}
    \theta^* = \alpha + (1 - \alpha) \cdot \Phi(\sqrt{\beta}(x^* - \theta^*)).
\end{equation}

Equation~\eqref{eq:theta} thus characterizes the regime strength \( \theta^* \) at which the regime is just indifferent between resisting and collapsing. Along with the citizen indifference condition~\eqref{eq:thresh}, this system determines the equilibrium thresholds \( (x^*, \theta^*) \).

Together, these two conditions describe the strategic landscape when no communication occurs: citizens base their behavior purely on private signals, while the regime anticipates the likely size of the attack based on signal distributions and the presence of partisans. Despite the simplicity of the setup, the presence of committed attackers introduces sufficient strategic uncertainty to yield rich and nontrivial comparative statics.

Since the regime’s utility is given by \( D(\theta - A) \), it will defend if and only if \( \theta > A \) and will abandon the status quo otherwise. The right-hand side of equation~\eqref{eq:theta} represents the aggregate attack faced by the regime. This includes two groups: a mass \( \alpha \) of partisans who always attack, and a mass \( 1 - \alpha \) of strategic citizens who attack only if their signal falls below the threshold \( x^* \).

By the law of large numbers,
\footnote{As is standard in the global games literature, we \emph{assume} that the law of large numbers applies to the continuum of agents. See \cite{Judd1985}.}
the fraction of strategic citizens with signals below \( x^* \), conditional on the state being \( \theta = \theta^* \), is equal to \( \Phi(\sqrt{\beta}(x^* - \theta^*)) \). Therefore, the expected total attack at state \( \theta^* \) is:
\[
    A(\theta^*; x^*, \alpha) = \alpha + (1 - \alpha) \cdot \Phi(\sqrt{\beta}(x^* - \theta^*)).
\]

This justifies the structure of equation~\eqref{eq:theta}. For a given \( \alpha \), the pair of conditions \eqref{eq:thresh}--\eqref{eq:theta} determines the unique equilibrium thresholds \( (x^*(\alpha), \theta^*(\alpha)) \).

Graphically, equation~\eqref{eq:thresh} describes a straight line in \( (x^*, \theta^*) \)-space with slope 1. It can be rearranged as:
\begin{equation}
\label{eq:line}
    \theta^* = \frac{1}{\sqrt{\beta}}\Phi^{-1}(c) + x^*.
\end{equation}

Meanwhile, equation~\eqref{eq:theta} defines an upward-sloping curve with horizontal asymptotes at \( -\infty \) and \( +\infty \). These two curves must intersect, so an equilibrium exists. We now show that the intersection is unique.

\begin{lemma}
For any \( \alpha \in (0,1) \) and \( c \in (0,1) \), the system defined by equations~\eqref{eq:thresh} and~\eqref{eq:theta} has a unique solution.
\end{lemma}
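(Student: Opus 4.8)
The plan is to exploit the fact that the citizen indifference condition~\eqref{eq:thresh} pins down the \emph{difference} $\theta^*-x^*$ exactly, which collapses the two-equation fixed-point system to a single scalar equation that can be solved in closed form.

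First I would invert~\eqref{eq:thresh}. Since $\Phi:\mathbb{R}\to(0,1)$ is a continuous, strictly increasing bijection and $c\in(0,1)$, the number $\Phi^{-1}(c)$ is well defined, and \eqref{eq:thresh} is equivalent to $\sqrt{\beta}\,(\theta^*-x^*)=\Phi^{-1}(c)$, i.e.\ to the line~\eqref{eq:line}. The key observation is that along this line the argument of $\Phi$ that appears in~\eqref{eq:theta} is \emph{constant}: $\sqrt{\beta}\,(x^*-\theta^*)=-\Phi^{-1}(c)$.

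Second I would substitute this into~\eqref{eq:theta}. Using $\Phi\!\left(-\Phi^{-1}(c)\right)=1-\Phi\!\left(\Phi^{-1}(c)\right)=1-c$, equation~\eqref{eq:theta} reduces to $\theta^*=\alpha+(1-\alpha)(1-c)$, which determines $\theta^*(\alpha)=1-(1-\alpha)c$ uniquely; the associated signal threshold is then forced by the line, $x^*(\alpha)=\theta^*(\alpha)-\tfrac{1}{\sqrt{\beta}}\Phi^{-1}(c)$. Hence the pair $(x^*,\theta^*)$ exists and is unique.

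There is essentially no analytic obstacle here: the only point requiring care is the well-definedness of $\Phi^{-1}(c)$, which is exactly what the hypothesis $c\in(0,1)$ provides (the restriction $\alpha\in(0,1)$ is not even needed for this route). If one preferred an argument that did not lean on the linearity of the indifference locus — itself a consequence of the improper-uniform prior together with the naive-citizens assumption — one could instead run the standard global-games uniqueness argument: for each fixed $x^*$ the map $\theta\mapsto \alpha+(1-\alpha)\Phi\!\left(\sqrt{\beta}(x^*-\theta)\right)$ is strictly decreasing, so \eqref{eq:theta} defines a single-valued increasing curve $\theta^*(x^*)$; differentiating implicitly (with $\phi=\Phi'$) gives slope $(1-\alpha)\sqrt{\beta}\,\phi(\cdot)\big/\bigl(1+(1-\alpha)\sqrt{\beta}\,\phi(\cdot)\bigr)<1$, strictly below the slope-$1$ line~\eqref{eq:line}, so the two loci cross at most once, while the curve's horizontal asymptotes at $\pm\infty$ guarantee at least one crossing.
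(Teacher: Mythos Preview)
Your primary argument is correct and more direct than the paper's. The paper proceeds by implicit differentiation of~\eqref{eq:theta} to compute the slope
\[
\frac{d\theta^*}{dx^*}=\frac{\sqrt{\beta}(1-\alpha)\,\phi(\sqrt{\beta}(x^*-\theta^*))}{1+\sqrt{\beta}(1-\alpha)\,\phi(\sqrt{\beta}(x^*-\theta^*))}<1,
\]
and compares it to the slope-$1$ line~\eqref{eq:line}, invoking the asymptotes for existence. You instead observe that~\eqref{eq:thresh} fixes $\sqrt{\beta}(x^*-\theta^*)=-\Phi^{-1}(c)$ outright, which turns~\eqref{eq:theta} into the \emph{explicit} equation $\theta^*=\alpha+(1-\alpha)(1-c)$, delivering existence and uniqueness simultaneously and in closed form. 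This buys you the actual thresholds $\theta^*(\alpha)=1-(1-\alpha)c$ and $x^*(\alpha)=\theta^*(\alpha)-\beta^{-1/2}\Phi^{-1}(c)$, which the paper's geometric argument does not provide at this stage. Your fallback paragraph reproduces the paper's own reasoning, so you have in effect given two proofs; the first is the cleaner one here, while the second is the more robust template---it would survive a perturbation (e.g.\ a proper normal prior on~$\theta$) under which the citizen locus~\eqref{eq:line} is no longer a $45^\circ$ line and the closed-form shortcut fails.
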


\begin{proof}
Implicitly differentiating equation~\eqref{eq:theta} with respect to \( x^* \) and \( \theta^* \), we obtain:
\[
    d\theta^* = (1 - \alpha) \cdot \phi(\sqrt{\beta}(x^* - \theta^*)) \cdot \sqrt{\beta}(dx^* - d\theta^*),
\]
which leads to:
\begin{equation}
\label{eq:slope}
    \frac{d\theta^*}{dx^*} = \frac{\sqrt{\beta}(1 - \alpha) \cdot \phi(\sqrt{\beta}(x^* - \theta^*))}{1 + \sqrt{\beta}(1 - \alpha) \cdot \phi(\sqrt{\beta}(x^* - \theta^*))}.
\end{equation}

Since \( \phi(\cdot) \), the standard normal density, is always positive, the fraction on the right is strictly less than one. Hence, the curve defined by equation~\eqref{eq:theta} has a slope strictly less than the slope of the 45-degree line defined by equation~\eqref{eq:line}. Therefore, they intersect only once.
\end{proof}

This uniqueness result mirrors the general logic in \cite{Hellwig2002}, who shows that global games admit unique equilibria when the prior is sufficiently diffuse relative to the signal precision. In our case, the prior is maximally diffuse—uniform over the real line—while the private signal has precision \( \beta > 0 \), satisfying the required condition.

We now build on this benchmark to explore how the equilibrium responds to changes in model primitives.

\subsection{Partisans and Regime Change: Comparative Statics on \( \alpha \)}

We now explore how changes in the share of committed attackers (partisans), denoted by \( \alpha \), affect equilibrium outcomes. Specifically, we ask how an increase in \( \alpha \) shifts the strategic threshold \( x^* \) used by citizens and the collapse threshold \( \theta^* \) for the regime.

\begin{proposition}
\label{prop:comparative}
An increase in the number of partisans, \( \alpha \), raises both the citizen attack threshold \( x^* \) and the regime’s collapse threshold \( \theta^* \).
\end{proposition}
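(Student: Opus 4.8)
The plan is to exploit the fact that the citizen indifference condition \eqref{eq:thresh} fixes the \emph{gap} $\theta^*-x^*$ at a value that does not depend on $\alpha$. Rewriting \eqref{eq:thresh} as $\sqrt{\beta}(\theta^*-x^*)=\Phi^{-1}(c)$ --- equivalently the line \eqref{eq:line} --- I would substitute $x^*-\theta^*=-\tfrac{1}{\sqrt{\beta}}\Phi^{-1}(c)$ into the regime indifference condition \eqref{eq:theta}. Using the identity $\Phi\bigl(-\Phi^{-1}(c)\bigr)=1-c$, equation \eqref{eq:theta} collapses to the closed form $\theta^*=\alpha+(1-\alpha)(1-c)=(1-c)+\alpha c$, and then $x^*=(1-c)+\alpha c-\tfrac{1}{\sqrt{\beta}}\Phi^{-1}(c)$.

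From these expressions the comparative statics are immediate: $\partial\theta^*/\partial\alpha=c$ and $\partial x^*/\partial\alpha=c$, both strictly positive since $c\in(0,1)$. This establishes both halves of the proposition at once, and incidentally shows that the two thresholds move in lockstep.

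As a cross-check that does not rely on spotting the closed form, I would instead totally differentiate the system \eqref{eq:thresh}--\eqref{eq:theta} in $\alpha$ (holding $c$ and $\beta$ fixed). Differentiating \eqref{eq:thresh} gives $dx^*=d\theta^*$ because $\phi>0$; plugging this into the differential of \eqref{eq:theta} annihilates the term in $\phi$ (whose coefficient multiplies $dx^*-d\theta^*=0$), leaving $d\theta^*=\bigl(1-\Phi(\sqrt{\beta}(x^*-\theta^*))\bigr)\,d\alpha=c\,d\alpha$, hence $d\theta^*/d\alpha=dx^*/d\alpha=c>0$.

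I do not anticipate a genuine obstacle here. The only things to be careful about are the algebraic identity $\Phi(-\Phi^{-1}(c))=1-c$ and the bookkeeping of the sign of the argument ($x^*-\theta^*$ versus $\theta^*-x^*$) when moving between \eqref{eq:thresh} and \eqref{eq:theta}. Well-posedness of the statement --- that $(x^*,\theta^*)$ is a single pair for each $\alpha$ --- is already guaranteed by the preceding lemma, so no additional argument is needed on that count.
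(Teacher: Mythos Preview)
Your proposal is correct. Your cross-check argument --- total differentiation of \eqref{eq:thresh}--\eqref{eq:theta} yielding $d\theta^*/d\alpha=dx^*/d\alpha=1-\Phi(\sqrt{\beta}(x^*-\theta^*))>0$ --- is exactly the paper's proof; you simply carry it one step further by evaluating the right-hand side as $c$ via \eqref{eq:thresh}.

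Your primary route, however, is slightly different and arguably cleaner than what the paper does here: rather than invoking the Implicit Function Theorem, you observe that \eqref{eq:thresh} pins down the gap $\theta^*-x^*$ independently of $\alpha$, substitute into \eqref{eq:theta}, and obtain the explicit closed form $\theta^*=(1-c)+\alpha c$, from which the comparative static is read off directly. The paper does not solve for $\theta^*(\alpha)$ explicitly in this section (the analogous substitution $\Phi(\sqrt{\beta}(x^*-\theta^*))=1-c$ only appears later, in the equilibrium characterisation of Section~\ref{sec:main}). Your approach buys a sharper quantitative conclusion --- $\partial\theta^*/\partial\alpha=\partial x^*/\partial\alpha=c$ --- at no extra cost, whereas the paper's version leaves the derivative in the less transparent form $1-\Phi(\sqrt{\beta}(x^*-\theta^*))$.
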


\begin{proof}
Totally differentiating equations~\eqref{eq:thresh} and~\eqref{eq:theta} with respect to \( x^* \), \( \theta^* \), and \( \alpha \), and applying the Implicit Function Theorem, we obtain:
\begin{equation}
\label{eq:comparativestatics}
    \frac{d\theta^*}{d\alpha} = \frac{dx^*}{d\alpha} = 1 - \Phi(\sqrt{\beta}(x^* - \theta^*)) > 0,
\end{equation}
where the inequality follows because \( \Phi(\cdot) \in (0,1) \).
\end{proof}

The intuition is direct. A larger \( \alpha \) increases the unconditional mass of attackers, regardless of strategic behavior. For any given signal \( x_i \), the expected probability that the regime will be overthrown is now higher. This makes attacking more attractive even for citizens with relatively high signals, thereby raising the attack threshold \( x^* \). A higher \( x^* \) in turn increases the mass of strategic attackers, further amplifying the total pressure on the regime. As a result, the regime must now be stronger (i.e., have a higher \( \theta \)) to survive, which increases \( \theta^* \) as well.

\section{Costly Propaganda and Counter-Propaganda}
\label{sec:main}

We now turn to the central case of interest: strategic communication by both the regime and the opposition leader when communication is feasible but costly. For tractability, we assume quadratic cost functions:
\begin{equation}
    \Gamma(y) = \frac{y^2}{2}, \quad \Delta(z) = \frac{\psi z^2}{2},
\end{equation}
where \( \psi > 0 \) captures the relative cost-effectiveness of counter-propaganda. A smaller \( \psi \) implies that the opposition can more easily influence citizen beliefs.

This cost structure allows us to derive closed-form characterizations of optimal communication strategies.

\subsection{The Impact of Propaganda on Citizen Thresholds}

When \( y \) and \( z \) are nonzero, the equilibrium threshold \( x^* \) used by strategic citizens becomes a function of both. That is,
\[
    x^* = x^*(y, z; \alpha).
\]

Given the citizens’ naivete, they do not internalize that their signals are biased by the strategic choices of \( y \) and \( z \). Therefore, propaganda and counter-propaganda shift \( x^* \) directly and symmetrically:

\begin{lemma}
\label{lemm:direct}
The threshold \( x^*(y, z; \alpha) \) responds to changes in propaganda one-for-one:
\begin{equation}
    \frac{\partial x^*}{\partial y} = -1, \quad \frac{\partial x^*}{\partial z} = 1.
\end{equation}
\end{lemma}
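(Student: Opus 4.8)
The plan is to trace exactly where the two manipulations $y$ and $z$ enter the strategic citizens' problem, use the naivete assumption to show that the citizens' cutoff on the \emph{observed} signal is untouched by $y$ and $z$, and then reduce the claim to a change of variables and a one-line differentiation.

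First I would pin down the strategic citizens' cutoff on the observed signal $x_i$. A naive citizen treats $x_i$ as a draw from $\mathcal N(\theta,\beta^{-1})$ and is wholly unaware of $y$ and $z$; hence the decision problem she perceives — her posterior over $\theta$, her model of how the aggregate attack depends on $\theta$, and the regime-collapse threshold she anticipates — is word-for-word the one solved in Section~\ref{sec:benchmark}. By the uniqueness result established there, this yields a well-defined cutoff $\bar x=\bar x(\alpha,\beta,c)$, characterised by the benchmark conditions \eqref{eq:thresh}--\eqref{eq:theta}, such that a strategic citizen attacks iff $x_i\le\bar x$. The crucial point is that $\bar x$ contains no $y$ or $z$: those variables appear nowhere in the (mis-specified) model on which the citizen acts.

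Next I would pass to ``de-biased'' coordinates. The attack the regime actually faces at a true state $\theta$ is generated by the realized signals $x_i=\theta+y-z+\varepsilon_i$, and a strategic citizen attacks precisely when $x_i\le\bar x$, i.e.\ when $\theta+\varepsilon_i\le\bar x-y+z$. So, written as a cutoff on the fundamental-plus-noise variable $\theta+\varepsilon_i$ — which is the object governing the aggregate attack in the regime's problem and in the equilibrium characterisation — the relevant threshold is
\begin{equation*}
    x^*(y,z;\alpha)\;=\;\bar x(\alpha,\beta,c)\;-\;y\;+\;z .
\end{equation*}
With this definition the strategic attack fraction at state $\theta$ is $\Phi(\sqrt\beta(x^*-\theta))$, retaining the form of \eqref{eq:theta}, which is exactly why $x^*$ rather than $\bar x$ is the natural state variable in what follows; at $y=z=0$ it reproduces $x^*=\bar x$, consistent with the benchmark. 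Differentiating the display gives $\partial x^*/\partial y=-1$ and $\partial x^*/\partial z=1$, as claimed.

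The only step carrying any content is the first one — the assertion that the citizens' signal-cutoff is invariant to $y$ and $z$. I would be careful to spell out that ``naive'' means the citizen's \emph{entire} perceived environment, not merely her likelihood, coincides with the benchmark, so that her best response is the benchmark cutoff and not some $(y,z)$-dependent object obtained by partially inverting the distortion. Once that is granted, no new fixed-point or monotonicity argument is needed beyond what Section~\ref{sec:benchmark} already supplies; the remainder is the elementary substitution and differentiation above.
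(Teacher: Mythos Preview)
Your argument is correct and is precisely the reasoning the paper has in mind: the sentence immediately preceding the lemma (``Given the citizens' naivete, they do not internalize that their signals are biased \ldots\ Therefore, propaganda and counter-propaganda shift $x^*$ directly and symmetrically'') is the paper's entire justification, and you have simply written it out in full. Your explicit distinction between the cutoff $\bar x$ on the \emph{observed} signal (invariant to $y,z$ by naivete) and the induced cutoff $x^*=\bar x-y+z$ on the de-biased variable $\theta+\varepsilon_i$ is a clarification the paper leaves implicit, but it is exactly what is needed to reconcile the lemma with the aggregate-attack expression~\eqref{eq:actualattack} and with equations~\eqref{eq:regimecritnew}--\eqref{eq:citizenexante}.
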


That is, a unit increase in regime propaganda \( y \) is equivalent to a unit decrease in the citizen threshold \( x^* \), while a unit increase in counter-propaganda \( z \) raises \( x^* \) by one unit.

\subsection{Strategic Inactivity for Extreme Regime Strength}

Consider the realized aggregate attack when the regime has strength \( \theta \) and the partisan mass is \( \alpha \):
\begin{equation}
\label{eq:actualattack}
    A(\theta; x^*, \alpha) = \alpha + (1 - \alpha) \cdot \Phi(\sqrt{\beta}(x^* - \theta)).
\end{equation}

Given the regime’s payoff function \( U_R = D(\theta - A) - \Gamma(y) \), the regime will abandon power when \( \theta < A \). Thus, for very low values of \( \theta \), the regime is certain to fall regardless of communication. Conversely, for very high \( \theta \), the regime is sure to survive.

Since the influence of \( y \) and \( z \) cannot alter the outcome in either extreme, neither player has an incentive to invest in communication. This yields the following result:

\begin{proposition}
\label{prop:dominance}
For \( \theta \in (-\infty, 0) \cup (1, \infty) \), the optimal choices are:
\begin{equation}
    y^*(\theta) = z^*(\theta) = 0.
\end{equation}
\end{proposition}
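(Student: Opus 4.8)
The plan is to split on the two dominance regions $\theta<0$ and $\theta>1$ and, in each, reduce both the regime's and the opposition's problems to minimising a strictly convex communication cost with a unique minimiser at the origin. Two structural facts do the work. First, from \eqref{eq:actualattack} and the fact that $\Phi(\cdot)\in(0,1)$, the realised attack satisfies $0<A(\theta;x^*,\alpha)\le 1$ for \emph{every} partisan share $\alpha\in[0,1]$ and \emph{every} pair $(y,z)$ (the lower bound uses $\Phi>0$ and $\alpha\ge0$, the upper bound uses $\Phi<1$ and $\alpha\le1$). Second, Lemma~\ref{lemm:direct} records how $x^*$---and hence $A$---responds to $(y,z)$.

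Consider first $\theta<0$. Since $A>0>\theta$ for all $\alpha$ and all $(y,z)$, the regime strictly prefers to abandon, so $D\equiv0$ and $U_R=-\Gamma(y)=-\frac{y^2}{2}$; simultaneously $\mathbb{I}_{\{A\ge\theta\}}\equiv1$, so $U_O=B-\Delta(z)=B-\frac{\psi z^2}{2}$. Each player's payoff depends on its own message only through its own cost, which is strictly convex and minimised only at $0$; hence $(y^*,z^*)=(0,0)$, in fact as a pair of strictly dominant choices. Now consider $\theta>1$. Since $A\le1<\theta$ for all $\alpha$ and all $(y,z)$, the regime always defends ($D\equiv1$) and $\mathbb{I}_{\{A\ge\theta\}}\equiv0$. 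For the opposition this again leaves only the cost, $U_O=-\Delta(z)$, so $z^*(\theta)=0$. It remains to establish $y^*(\theta)=0$ for the regime, whose continuation payoff is now $U_R=\theta-A(\theta;x^*(y,z;\alpha),\alpha)-\frac{y^2}{2}$, to be maximised in expectation over the unknown $\alpha$.

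I expect this last step to be the main obstacle and the only non-routine point. In contrast to the opposition, the regime's defending payoff $\theta-A$ is \emph{not} invariant to $y$: by Lemma~\ref{lemm:direct}, $\partial x^*/\partial y=-1$, so more propaganda lowers $\Phi(\sqrt{\beta}(x^*-\theta))$ and shrinks the realised attack, which raises $\theta-A$ at fixed $\theta$. A complete argument therefore has to rule out an interior best response for the regime throughout $(1,\infty)$; the natural route is to invoke the interpretation of $U_R$ under which, once survival is guaranteed, the payoff-relevant object is the (now settled) binary fate of the regime rather than the margin $\theta-A$, so that the only remaining effect of $y$ is its cost $\frac{y^2}{2}$. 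Granting that, the regime's objective is strictly decreasing in $|y|$ and $y^*(\theta)=0$ follows; the remaining ingredients---the bounds on $A$, strict convexity of $\Gamma$ and $\Delta$, and uniqueness of the minimiser at the origin---are then routine.
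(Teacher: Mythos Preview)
Your approach mirrors the paper's informal argument: bound $A\in(0,1]$ via \eqref{eq:actualattack}, so that for $\theta<0$ the regime certainly falls and for $\theta>1$ it certainly survives, whence communication only burns resources. The paper offers no formal proof beyond the two paragraphs preceding the proposition, and for $\theta<0$ (both players) and for $z^*$ when $\theta>1$ your argument is complete and matches that reasoning.

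Where you go further than the paper is in flagging, correctly, that the regime's case for $\theta>1$ does \emph{not} follow from this logic under the stated payoff $U_R=D(\theta-A)-\Gamma(y)$. With $D\equiv1$ the regime's objective is $\theta-A-\tfrac{y^2}{2}$, and by Lemma~\ref{lemm:direct} a marginal increase in $y$ strictly lowers $A$; at $y=0$ the derivative of the expected payoff with respect to $y$ is $\sqrt{\beta}\,\mathbb{E}_\alpha\!\left[(1-\alpha)\phi(\sqrt{\beta}(x^*-\theta))\right]>0$, so $y=0$ is not a best response. The paper's phrase ``the influence of $y$ and $z$ cannot alter the outcome'' implicitly treats the regime's payoff as depending only on the binary survival event, and your proposed fix---replace the margin $\theta-A$ by survival alone---is precisely what that sentence assumes. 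But neither you nor the paper grounds this reinterpretation in the model as written. You have therefore reproduced the paper's argument faithfully and, in addition, located a genuine gap that the paper's own discussion leaves open; your candour about not being able to close it without altering the payoff specification is accurate.
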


That is, strategic communication is only used when the regime's strength lies within a critical intermediate range. Outside of this range, the outcome is predetermined, and no player has a reason to distort citizen beliefs.

\subsection{Opposition leader's problem}

For \( \theta \in [0,1] \), the success of a regime collapse depends on the realized number of partisans \( \alpha \). Since \( \Phi(\sqrt{\beta}(x^*(y,z;\alpha) - \theta)) < 1 \), the aggregate attack defined in Equation~\eqref{eq:actualattack} is strictly increasing in \( \alpha \). Therefore, for each \( \theta \in [0,1] \) and a given communication strategy \( (y,z) \), there exists a unique critical value \( \alpha^*(\theta; y, z) \) such that the regime collapses if and only if \( \alpha \geq \alpha^*(\theta; y, z) \).

Solving \( A(\theta; x^*(y,z;\alpha), \alpha) = \theta \) for \( \alpha \), we obtain:
\begin{equation}
\label{eq:explicitalpha}
    \alpha^*(\theta; y, z) = \frac{\theta - \Phi(\sqrt{\beta}(x^*(y,z;\alpha) - \theta))}{1 - \Phi(\sqrt{\beta}(x^*(y,z;\alpha) - \theta))}.
\end{equation}

This threshold satisfies \( \alpha^*(\theta; y, z) \leq 1 \) for any \( \theta \leq 1 \). Moreover, for sufficiently low \( \theta \), the numerator in Equation~\eqref{eq:explicitalpha} may become negative, implying that the regime would fall even if \( \alpha = 0 \). Let \( \underline{\theta}(y,z) \) denote the unique value satisfying
\begin{equation}
    \underline{\theta} = \Phi(\sqrt{\beta}(x^*(y,z;\alpha) - \underline{\theta})),
\end{equation}
so that \( \alpha^*(\theta; y, z) > 0 \) if and only if \( \theta > \underline{\theta}(y,z) \).

The opposition leader's expected payoff is then:
\begin{equation}
\label{eq:oppobjective}
\mathbb{E}[U_O] = -\frac{\psi z^2}{2} +
\begin{cases}
    B & \text{if } \theta \leq \underline{\theta}(y,z), \\
    B(1 - \alpha^*(\theta; y, z)) & \text{if } \underline{\theta}(y,z) < \theta \leq 1, \\
    0 & \text{if } \theta > 1.
\end{cases}
\end{equation}

This reflects that the probability of a successful regime change is \( 1 - \alpha^*(\theta; y, z) \), given that \( \alpha \sim U[0,1] \).

\subsection{Regime's problem}

The regime, observing \( \theta \) but not \( \alpha \), chooses its level of propaganda \( y \) to influence the citizen threshold \( x^*(y,z;\alpha) \). The expected payoff to the regime, integrating over the region where it survives, is
\begin{equation}
\label{eq:regimeexante}
\mathbb{E}[U_R] = \int_0^{\alpha^*(\theta; y, z)} \left(\theta - \left[\alpha + (1 - \alpha)\Phi(\sqrt{\beta}(x^*(y,z;\alpha) - \theta))\right]\right) d\alpha - \frac{y^2}{2}.
\end{equation}

We define \( \theta^* \) as the regime strength for which the regime is indifferent between maintaining and abandoning power. That is:
\begin{lemma}
The critical regime strength \( \theta^* \) solves
\begin{equation}
\label{eq:regimecritnew}
    \theta^*(y,z;\alpha) = \alpha + (1 - \alpha) \Phi(\sqrt{\beta}(x^*(y,z;\alpha) - \theta^*(y,z;\alpha))).
\end{equation}
\end{lemma}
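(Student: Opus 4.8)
The plan is to read the claim as an existence-and-uniqueness statement for the fixed point that defines $\theta^*$, and to reduce it to a monotone scalar equation to which the intermediate value theorem applies. Recall from $U_R = D(\theta - A) - \Gamma(y)$ that, after observing the realized attack, the regime defends iff $\theta > A$ and abandons iff $\theta < A$, where the aggregate attack at state $\theta$ is $A(\theta; x^*,\alpha) = \alpha + (1-\alpha)\Phi(\sqrt{\beta}(x^* - \theta))$ from \eqref{eq:actualattack}, with $x^* = x^*(y,z;\alpha)$ the naive citizens' signal threshold induced by the profile $(y,z)$. By definition $\theta^*$ is the value of $\theta$ at which the regime is indifferent, i.e.\ $\theta = A(\theta;x^*,\alpha)$; substituting the expression for $A$ yields exactly \eqref{eq:regimecritnew}. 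So it remains to show this equation has a unique root and that it indeed separates a collapse region from a survival region.

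First I would fix $(y,z,\alpha)$, treat $x^* = x^*(y,z;\alpha)$ as a constant, and set $g(\theta) := \theta - \alpha - (1-\alpha)\,\Phi\!\left(\sqrt{\beta}(x^* - \theta)\right)$. Then $g$ is $C^1$ with $g'(\theta) = 1 + \sqrt{\beta}(1-\alpha)\,\phi\!\left(\sqrt{\beta}(x^*-\theta)\right) \ge 1 > 0$, so $g$ is strictly increasing; and since $\Phi \in (0,1)$ we have $\theta - 1 \le g(\theta) \le \theta - \alpha$, so $g(\theta)\to -\infty$ as $\theta\to -\infty$ and $g(\theta)\to +\infty$ as $\theta\to +\infty$. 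By the intermediate value theorem $g$ has a root, and by strict monotonicity it is unique; this root $\theta^*$ satisfies $\theta^* = \alpha + (1-\alpha)\Phi(\sqrt{\beta}(x^*(y,z;\alpha) - \theta^*))$, which is \eqref{eq:regimecritnew}. Strict monotonicity also gives $g(\theta) < 0 \iff \theta < \theta^*$, i.e.\ the regime abandons precisely when $\theta < \theta^*$ and defends when $\theta > \theta^*$, so $\theta^*$ is genuinely the critical strength; and since $\theta - 1 \le g(\theta) \le \theta - \alpha$, the root lies in $(\alpha,1) \subseteq (0,1)$, consistent with the restriction to $\theta\in[0,1]$ in Proposition~\ref{prop:dominance}.

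The one delicate point — and the main obstacle — is the status of $x^*(y,z;\alpha)$: it is not a free parameter but is itself pinned down by the naive citizens' indifference condition (the analogue of \eqref{eq:thresh}) together with the one-for-one shift of Lemma~\ref{lemm:direct}, and one must confirm that this determines $x^*$ without reintroducing a dependence on $\theta^*$ that would break the monotonicity of $g$. If one instead prefers to view $x^*$ and $\theta^*$ as jointly determined by the citizen-indifference relation together with \eqref{eq:regimecritnew}, uniqueness follows verbatim from the slope comparison used in the benchmark uniqueness lemma (Section~\ref{sec:benchmark}): implicit differentiation of \eqref{eq:regimecritnew} gives a curve in $(x^*,\theta^*)$-space with slope $\sqrt{\beta}(1-\alpha)\phi/\bigl(1+\sqrt{\beta}(1-\alpha)\phi\bigr) < 1$, strictly below the unit slope of the citizen line, so the two meet exactly once. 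All remaining computations are routine.
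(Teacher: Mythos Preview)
Your argument is correct, but it does considerably more than the paper's own proof. In the paper, $\theta^*$ is \emph{defined} immediately before the lemma as ``the regime strength for which the regime is indifferent between maintaining and abandoning power,'' and the proof is a single line: ``Immediate by definition of $\theta^*$ as the value for which the regime is exactly indifferent between resisting and conceding.'' In other words, the lemma is treated as nothing more than a restatement of the definition in explicit form, obtained by substituting the expression for $A(\theta;x^*,\alpha)$ from \eqref{eq:actualattack} into the indifference condition $\theta = A$.

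What you supply instead is a genuine existence-and-uniqueness argument: you show that $g(\theta)=\theta-A(\theta;x^*,\alpha)$ is strictly increasing with range $\mathbb{R}$, hence has exactly one root, and that this root really does separate the defend and abandon regions. This is a more rigorous treatment --- it confirms that the definition is well-posed, something the paper takes for granted (or implicitly inherits from the benchmark uniqueness lemma in Section~\ref{sec:benchmark}). Your additional observation that $\theta^*\in(\alpha,1)$ and your discussion of how $x^*$ is determined are likewise not in the paper's proof. So: same conclusion, but you have turned a definitional remark into an honest analytical lemma, which is a strict improvement in content at the cost of some length.
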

\begin{proof}
Immediate by definition of \( \theta^* \) as the value for which the regime is exactly indifferent between resisting and conceding.
\end{proof}
 
\subsection{Citizen's problem}

The threshold \( x^* \) used by strategic citizens satisfies the same indifference condition as before:
\begin{equation}
    \Phi(\sqrt{\beta}(\theta^*(y,z;\alpha) - x^*(y,z;\alpha))) = c.
\end{equation}

Equivalently, we express this as a linear relationship:
\begin{equation}
\label{eq:citizenexante}
    \theta^*(y,z;\alpha) = \frac{1}{\sqrt{\beta}}\Phi^{-1}(c) + x^*(y,z;\alpha).
\end{equation}

\subsection{Optimality conditions}

Given the opposition leader’s problem in Equation~\eqref{eq:oppobjective}, the first-order condition with respect to \( z \) (when \( \theta > \underline{\theta}(y,0) \)) is:
\begin{equation}
    -\psi z - B\frac{\partial \alpha^*(\theta; y, z)}{\partial z} = 0.
\end{equation}

Using Equation~\eqref{eq:explicitalpha} and Lemma~\ref{lemm:direct}, we derive the optimal \( z^*(\theta) \):
\begin{equation}
\label{eq:optimalz}
    z^*(\theta) = \frac{B}{\psi}\sqrt{\beta}(1 - \theta) \cdot \frac{\phi(\sqrt{\beta}(x^*(\alpha; y, z^*) - \theta))}{[1 - \Phi(\sqrt{\beta}(x^*(\alpha; y, z^*) - \theta))]^2}.
\end{equation}

Similarly, differentiating Equation~\eqref{eq:regimeexante} with respect to \( y \), and using \( \partial x^* / \partial y = -1 \), we obtain:
\begin{equation}
\label{eq:optimalpropaganda}
    y^*(\theta) = \sqrt{\beta} \int_0^{\alpha^*(\theta; y^*, z)} (1 - \alpha) \cdot \phi(\sqrt{\beta}(x^*(\alpha; y^*, z) - \theta)) \, d\alpha.
\end{equation}

\subsection{Equilibrium Characterization}

Equilibrium consists of a quadruple \( (y^*, z^*, x^*, \theta^*) \) satisfying:
\begin{enumerate}
    \item Regime optimality: Equation~\eqref{eq:optimalpropaganda};
    \item Opposition optimality: Equation~\eqref{eq:optimalz};
    \item Regime indifference: Equation~\eqref{eq:regimecritnew};
    \item Citizen indifference: Equation~\eqref{eq:citizenexante}.
\end{enumerate}

At \( \theta = \theta^* \), we may simplify the conditions using:
\begin{equation}
    \sqrt{\beta}(x^* - \theta^*) = -\Phi^{-1}(c),
\end{equation}
which implies:
\begin{align}
    \theta^* &= \alpha^* + (1 - \alpha^*)(1 - c), \\
    y^* &= \sqrt{\beta} \phi(\Phi^{-1}(c)) \alpha^* \left(1 - \frac{\alpha^*}{2}\right), \\
    z^* &= \frac{B}{\psi} \sqrt{\beta}(1 - \alpha^*) \frac{\phi(\Phi^{-1}(c))}{c}.
\end{align}

Since \( x^* = \theta^* + y^* - z^* \), substituting gives:
\begin{equation}
    z^* - y^* = \frac{1}{\sqrt{\beta}}\Phi^{-1}(c).
\end{equation}

Solving for \( \alpha^* \), we obtain a quadratic:
\begin{equation}
    {\alpha^*}^2 - 2\left(1 + \frac{B}{\psi c}\right) \alpha^* - \frac{2\Phi^{-1}(c)}{\beta \phi(\Phi^{-1}(c))} + \frac{2B}{\psi c} = 0.
\end{equation}

In the special case where \( c = 1/2 \), we have \( \Phi^{-1}(1/2) = 0 \), simplifying to:
\begin{equation}
    {\alpha^*}^2 - 2\left(1 + \frac{2B}{\psi}\right) \alpha^* + \frac{4B}{\psi} = 0,
\end{equation}
with solution:
\begin{equation}
    \alpha^* = 1 + \frac{2B}{\psi} - \sqrt{1 + \left(\frac{2B}{\psi}\right)^2}.
\end{equation}

\begin{proposition}
\label{prop:main}
When the attack cost is \( c = 1/2 \), the equilibrium is characterized by:
\begin{align}
    \theta^* &= 1 + \frac{B}{\psi} - \frac{1}{2}\sqrt{1 + \left(\frac{2B}{\psi}\right)^2}, \\
    x^* &= \theta^*, \\
    y^* = z^* &= \frac{B}{\psi} \sqrt{\frac{2\beta}{\pi}} \left(\sqrt{1 + \left(\frac{2B}{\psi}\right)^2} - \frac{2B}{\psi}\right).
\end{align}
\end{proposition}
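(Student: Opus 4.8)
The plan is to take the four equilibrium conditions listed before the proposition, specialise them to $c=1/2$, collapse the system to a single scalar equation for the critical partisan mass $\alpha^* := \alpha^*(\theta^*;y^*,z^*)$, solve that equation, and substitute back; write $t := 2B/\psi$. \textbf{Step 1 (specialise at $c=1/2$).} Here $\Phi^{-1}(1/2)=0$ and $\phi(0)=1/\sqrt{2\pi}$. The citizen indifference condition~\eqref{eq:citizenexante} then gives $x^*=\theta^*$, and together with $x^*=\theta^*+y^*-z^*$ this forces $y^*=z^*$. The structural fact I exploit is that at $\theta=\theta^*$ one has $\sqrt{\beta}(x^*-\theta^*)=-\Phi^{-1}(c)=0$ for \emph{every} $\alpha$; this is what removes the apparent self-reference in~\eqref{eq:optimalz} and~\eqref{eq:optimalpropaganda}, where $x^*$ sits inside $\phi$ and $\Phi$. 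Carrying it through, regime indifference~\eqref{eq:regimecritnew} becomes $\theta^*=\alpha^*+(1-\alpha^*)(1-c)=(1+\alpha^*)/2$; the regime FOC~\eqref{eq:optimalpropaganda} integrates to $y^*=\sqrt{\beta}\,\phi(0)\,\alpha^*(1-\alpha^*/2)=\sqrt{\beta/(2\pi)}\,\alpha^*(1-\alpha^*/2)$; and the opposition FOC~\eqref{eq:optimalz} reduces to $z^*=(B/\psi)\sqrt{\beta}\,(1-\alpha^*)\,\phi(0)/c=t\sqrt{\beta/(2\pi)}\,(1-\alpha^*)$.

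\textbf{Step 2 (solve for $\alpha^*$).} Equating the two displays for $y^*$ and $z^*$ and cancelling the common factor $\sqrt{\beta/(2\pi)}$ leaves $\alpha^*(1-\alpha^*/2)=t(1-\alpha^*)$, i.e. the quadratic $(\alpha^*)^2-2(1+t)\alpha^*+2t=0$ recorded in the text just above the proposition. Its discriminant collapses to $4(1+t^2)$, so $\alpha^*=(1+t)\pm\sqrt{1+t^2}$. The $+$ root exceeds $2$ and is inadmissible; since $t\le\sqrt{1+t^2}\le 1+t$ for $t\ge 0$, the $-$ root lies in $[0,1]$, giving $\alpha^*=1+t-\sqrt{1+t^2}$.

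\textbf{Step 3 (back-substitute and check admissibility).} Putting $\alpha^*$ into $\theta^*=(1+\alpha^*)/2$ yields the claimed $\theta^*$; as $\alpha^*\in[0,1]$ we get $\theta^*\in[1/2,1]\subset[0,1]$, which confirms the candidate genuinely lies in the interior regime of Proposition~\ref{prop:dominance} (so using the interior FOCs was legitimate) and that $\theta^*>\underline{\theta}$, i.e. $\alpha^*>0$ whenever $B>0$. For the communication levels I use the quadratic to eliminate $(\alpha^*)^2$: $\alpha^*(1-\alpha^*/2)=\alpha^*-(1+t)\alpha^*+t=t(1-\alpha^*)=t(\sqrt{1+t^2}-t)$, hence $y^*=\sqrt{\beta/(2\pi)}\,t(\sqrt{1+t^2}-t)=(B/\psi)\sqrt{2\beta/\pi}\,(\sqrt{1+t^2}-t)$; and directly $z^*=t\sqrt{\beta/(2\pi)}\,(1-\alpha^*)=(B/\psi)\sqrt{2\beta/\pi}\,(\sqrt{1+t^2}-t)$. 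These are the stated closed forms, $y^*=z^*$ is reconfirmed, and $x^*=\theta^*$ was already obtained in Step~1.

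\textbf{Expected main obstacle.} The algebra is routine; the point I would treat most carefully is the sufficiency of the first-order conditions — that the stationary points in~\eqref{eq:optimalz} and~\eqref{eq:optimalpropaganda} are genuine best replies, not just critical points. I would verify that $\mathbb{E}[U_O]$ is concave in $z$ on the relevant range (the cost $\psi z^2/2$ is strictly convex, so this amounts to bounding $\partial^2\alpha^*/\partial z^2$ from below there) and that the bracketed integrand in~\eqref{eq:regimeexante} makes $\mathbb{E}[U_R]$ single-peaked in $y$, and then confirm that the mutual best-reply fixed point in $(y^*,z^*,x^*,\theta^*)$ coincides with the profile produced by Steps~1--3, so that no other threshold equilibrium competes with it. Everything else reduces to the substitutions above.
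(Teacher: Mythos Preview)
Your proof is correct and follows essentially the same route as the paper: you specialise the four equilibrium conditions at $\theta=\theta^*$ using $\sqrt{\beta}(x^*-\theta^*)=-\Phi^{-1}(c)$, reduce to the paper's quadratic in $\alpha^*$, pick the admissible root, and back-substitute. The only cosmetic difference is that the paper writes the quadratic for general $c$ before setting $c=1/2$, whereas you specialise first; your added checks that $\alpha^*\in[0,1]$ and $\theta^*\in[1/2,1]$, and your remark on second-order conditions, go slightly beyond what the paper records but do not change the argument.
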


While we chose \( c = 1/2 \) to obtain analytic solutions, this case is also meaningful. When the noise \( \varepsilon_i \) is small, citizen beliefs are tightly clustered. Then, each individual perceives a 50\% chance of protest success—precisely matching the cost \( c = 1/2 \) for the marginal, indifferent citizen.

Thus, the equilibrium characterizes the tipping point: the regime’s last stand, the opposition’s critical push, and the knife-edge beliefs of the marginal protester.

\section{Applications and Extensions}
\label{sec:comp}

This section outlines directions in which our model can be extended and adapted to better reflect real-world political scenarios. Before turning to these extensions, we present several comparative statics results.

\subsection{Comparative Statics}

We focus on three key parameters: the opposition leader's benefit from regime collapse ($B$), the cost of counter-propaganda ($\psi$), and the precision of citizens' private signals ($\beta$). Our findings are summarized in the following proposition:

\begin{proposition}
\label{prop:compstateq}
An increase in the precision of private signals, $\beta$, raises both $y^{*}$ and $z^{*}$.

An increase in the benefit-cost ratio, $\frac{B}{\psi}$, raises $y^{*}$, $z^{*}$, $x^{*}$, and $\theta^{*}$.
\end{proposition}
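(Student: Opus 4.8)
The plan is to work directly from the closed-form expressions established in Proposition~\ref{prop:main}, since the comparative statics are stated only for the tractable case $c = 1/2$. Let me abbreviate $r := B/\psi > 0$, so that the equilibrium objects become $\theta^{*} = x^{*} = 1 + r - \tfrac12\sqrt{1 + 4r^{2}}$ and $y^{*} = z^{*} = r\sqrt{2\beta/\pi}\left(\sqrt{1 + 4r^{2}} - 2r\right)$. The key observation is that $\beta$ enters $y^{*} = z^{*}$ only through the multiplicative factor $\sqrt{2\beta/\pi}$ and does not enter $\theta^{*}$ or $x^{*}$ at all; hence $\partial y^{*}/\partial\beta = \partial z^{*}/\partial\beta = \tfrac{1}{2}\sqrt{2/(\pi\beta)}\cdot r(\sqrt{1+4r^{2}}-2r) > 0$, where positivity of the bracketed term follows because $\sqrt{1+4r^{2}} > \sqrt{4r^{2}} = 2r$. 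That disposes of the first claim in one line.

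For the dependence on the benefit-cost ratio $r$, I would differentiate each expression with respect to $r$. For $\theta^{*} = x^{*}$ we get $\mathrm{d}\theta^{*}/\mathrm{d}r = 1 - 2r/\sqrt{1+4r^{2}}$, and since $\sqrt{1+4r^{2}} > 2r$ this is strictly positive. For $y^{*} = z^{*} = r\sqrt{2\beta/\pi}\,g(r)$ with $g(r) := \sqrt{1+4r^{2}} - 2r$, the product rule gives $\mathrm{d}y^{*}/\mathrm{d}r = \sqrt{2\beta/\pi}\left(g(r) + r\,g'(r)\right)$ where $g'(r) = 4r/\sqrt{1+4r^{2}} - 2$. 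The sign of $g(r) + r g'(r)$ needs a short argument: note $g(r) > 0$ always, and $r g'(r) = 2r(2r - \sqrt{1+4r^{2}})/\sqrt{1+4r^{2}}$ is negative, so the claim amounts to showing $g(r) > -r g'(r) = 2r(\sqrt{1+4r^{2}}-2r)/\sqrt{1+4r^{2}} = g(r)\cdot 2r/\sqrt{1+4r^{2}}$, i.e. $1 > 2r/\sqrt{1+4r^{2}}$, which is again the same elementary inequality. So $\mathrm{d}y^{*}/\mathrm{d}r > 0$ and likewise for $z^{*}$.

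The only mild subtlety — and the step I expect to require the most care — is the claim that increasing $B/\psi$ raises $x^{*}$ and $\theta^{*}$. In the $c = 1/2$ equilibrium $x^{*} = \theta^{*}$ exactly, so the two assertions collapse to one, and the monotonicity reduces to the inequality $\sqrt{1+4r^{2}} > 2r$ handled above; I would just remark explicitly that the equality $x^{*} = \theta^{*}$ is what makes this immediate, rather than leaving the reader to wonder whether a separate computation is needed. One should also note that an increase in $B/\psi$ can be realized either as an increase in $B$ or a decrease in $\psi$; since everything depends on $B$ and $\psi$ only through the ratio $r$, the chain rule gives $\partial/\partial B = (1/\psi)\,\mathrm{d}/\mathrm{d}r > 0$ and $\partial/\partial\psi = -(B/\psi^{2})\,\mathrm{d}/\mathrm{d}r < 0$, consistent with the statement. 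I would assemble these pieces into a short proof, stating the single elementary fact $\sqrt{1+4r^{2}} > 2r$ once at the outset and invoking it at each of the three sign checks.
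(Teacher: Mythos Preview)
Your proof is correct and is in fact more complete than what the paper provides: the paper offers no formal proof of Proposition~\ref{prop:compstateq}, only a one-paragraph intuitive explanation following the statement. Direct differentiation of the closed-form expressions from Proposition~\ref{prop:main}, governed throughout by the single inequality $\sqrt{1+4r^{2}}>2r$, is exactly the natural way to formalize those intuitions, and your treatment of the product-rule step for $y^{*}=z^{*}$ is clean.
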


The intuition behind these results is straightforward. A higher benefit from regime collapse ($B$) or a lower cost of counter-propaganda ($\psi$) encourages the opposition to increase $z$. Likewise, greater signal precision ($\beta$) makes citizens more responsive to their information, thus enhancing the effectiveness of persuasion and motivating both sides to exert greater effort.

\subsection{Introducing Dynamics}

One natural extension is to consider \emph{dynamics}.
\footnote{We thank an anonymous referee for encouraging us to explore this direction.} For instance, we may envision a setting in which the underlying fundamental ($\theta$) evolves over time, and in each period, citizens receive an additional private signal. If there is persistence in $\theta$, the model becomes a dynamic game rather than a series of static ones.

A related setup was studied by \cite{Angeletosetal2007}, who consider a case where regime strength is fixed but unobserved by the regime. The only commonly known fact at each date $t$ is that the regime has survived thus far. Their game ends upon regime collapse. Our framework could be adapted to include a second player—the opposition leader—so as to explore how recognition of manipulative messaging might provoke mass protest.

Conversely, if the regime observes unusually high levels of opposition activity, it may be induced to increase propaganda efforts in subsequent periods. Given a binding budget constraint, such spending could divert resources from security or repression, weakening the regime's long-term position.

One real-world illustration is the aftermath of the 2020 Belarusian elections, where an official result of over 80\% for Lukashenko was so implausible that it prompted widespread public protests. The manipulation, rather than deterring dissent, arguably galvanized it.

\section{Conclusion}
\label{sec:conclusion}

Over the past decade, political polarization has deepened across many societies. A key factor behind this trend is the rise of social media platforms that allow opposing factions to disseminate competing narratives. As a result, the resources allocated to propaganda—by both regimes and their opponents—have grown considerably.

\begin{sidewaystable}
\centering
\caption{Comparison with key papers in the regime-change literature}
\label{tab:comparison}
\begin{tabular}{p{3.4cm}p{3.0cm}p{3.0cm}p{3.0cm}p{3.0cm}}
\toprule
\textbf{Our paper} 
& \textbf{Edmond (2013)} 
& \textbf{De Mesquita (2010)} 
& \textbf{Angeletos \& Pavan (2006)} 
& \textbf{Corsetti et al.\ (2004)} \\
\midrule

Both the regime \emph{and} an opposition leader use communication (propaganda vs.\ counter-propaganda) 
& Only the regime uses propaganda 
& Opposition relies on \emph{violent} acts as a signal rather than informational campaigns 
& No explicit opposition role; the regime can send policy signals but does not face direct counter-propaganda 
& Focus on a currency-attack setting with a continuum of speculators plus one large speculator \\

\midrule

Regime's net payoff is $\theta - A$, with an option to defend or abandon 
& Regime's ex-post defense/abandon decision not modeled explicitly; focus on how propaganda deters attacks 
& A threshold for regime collapse is exogenous; violent vanguards push citizens to cross it 
& Regime's benefit also takes the form $\theta - A$, but with no organized opposition communicator 
& Once speculative attacks exceed a threshold, the currency peg collapses mechanically \\

\midrule

An \emph{unknown} fraction of partisans always attack; their actual share is not observed in advance 
& No partisans; all citizens are rational Bayesian updaters 
& Partisans exist, but their size is common knowledge; they engage in violence 
& No partisans, though policy actions can inadvertently reveal weaknesses 
& A single “large player” can coordinate smaller speculators; both sizes are known \\
    
\midrule

The intensity of counter-propaganda rises with the precision of private signals, as citizens do not discount manipulation 
& Even when citizens realize the regime's incentive to bias information, propaganda can still deter an attack 
& Violent signals from revolutionary vanguards can galvanize mass participation 
& Regime's own actions can backfire by signaling weakness, triggering self-fulfilling attacks 
& Large speculator orchestrates and intensifies attacks by mobilizing small speculators \\
\bottomrule
\end{tabular}
\end{sidewaystable}

This paper develops a regime change game involving three actors: a regime attempting to preserve the status quo, an opposition leader seeking to overthrow it, and a continuum of citizens who decide whether to attack. Citizens observe private signals about the regime's strength, which can be strategically distorted by both sides.

A central novelty of our approach is the presence of two strategic communicators, each attempting to shift citizens' beliefs in opposite directions. Another key feature is the inclusion of \emph{partisans}—a fraction of the population committed to attacking the regime regardless of their private signals. The true number of these partisans is not known ex ante, introducing strategic uncertainty.

We derive closed-form expressions for equilibrium propaganda levels under the assumption that the cost of attacking is $c = \frac{1}{2}$. We show that the equilibrium level of opposition counter-propaganda increases with the benefit-cost ratio $\frac{B}{\psi}$ and with the precision of private signals $\beta$, while the regime’s strength threshold $\theta^{*}$ and citizens' aggressiveness threshold $x^{*}$ also rise.

In sum, our results highlight the role of information precision and strategic uncertainty in shaping the intensity and impact of political messaging. We believe this framework opens new avenues for studying propaganda, belief formation, and regime stability in environments characterized by conflicting narratives and incomplete information.

\emph{Table \ref{tab:comparison} provides a structured overview of how our model compares with these notable contributions in the literature.}

One serious drawback of the current approach was that we have assumed that the citizens are naive, in that they do not recognise that the private signals they get ($x_i$'s) may be affected by $y$ and $z$: that is, when they form the posterior beliefs regarding $\theta$ conditional on their signals, they behave as if they have received the signals $x_i=\theta+\varepsilon_i$. One way to justify this assumption is that in a certain sense, this is how propaganda works, whereby its effect is left unnoticed by those who are affected by it.

Another limitation of our approach is that the behaviour of the partisans is left unmodeled: we simply assume that they always choose to attack. An alternative, and perhaps more realistic, way to capture the partisans' choice would be to assume that they take into account their private signals but are not influenced by propaganda.

\appendix

 In this appendix, we will show intermediate computations.
 
 \paragraph{Comments concerning the $U_R$ function: }

Since the regime makes its decision $D \in \{0,1\}$ \emph{after} observing the aggregate attack $A$, it follows that the regime’s optimal strategy is threshold-based: choose $D=1$ (defend) if and only if $\theta \geq A$, and $D=0$ (abandon) otherwise. Thus, the regime’s payoff function $U_R(\theta; A, y)$, given the propaganda level $y$, becomes a piecewise-linear function of $\theta$. Specifically,
\[
U_R(\theta; A, y) = 
\begin{cases}
-\Gamma(y), & \text{if } \theta < A,\\
\theta - A - \Gamma(y), & \text{if } \theta \geq A.
\end{cases}
\]
This function is flat for $\theta < A$ and increases linearly with slope 1 for $\theta \geq A$. The kink at $\theta = A$ reflects the regime’s discrete switch in decision.

\paragraph{Derivation of condition \eqref{eq:comparativestatics}: }

We begin by totally differentiating the citizen's indifference condition \eqref{eq:thresh}, which characterizes the signal threshold $x^*$ at which the expected benefit from attacking equals the cost:
\[
\Phi(\sqrt{\beta}(\theta^* - x^*)) = c.
\]
Using the chain rule and letting $\phi(\cdot) = \Phi'(\cdot)$ denote the standard normal density, we obtain:
\[
\sqrt{\beta}\phi(\sqrt{\beta}(\theta^* - x^*))(d\theta^* - dx^*) = 0.
\]
Since $\phi(\cdot) > 0$ and $\sqrt{\beta} > 0$, this implies:
\[
d\theta^* = dx^*.
\]
This tells us that small changes in the regime's abandonment threshold $\theta^*$ must be matched one-for-one by changes in the citizens' signal threshold $x^*$ to preserve indifference.

Next, we totally differentiate the regime’s indifference condition \eqref{eq:theta}, which determines $\theta^*$ as a function of $x^*$ and the partisan fraction $\alpha$:
\[
\theta^* = \alpha + (1 - \alpha) \Phi(\sqrt{\beta}(x^* - \theta^*)).
\]
Differentiating both sides with respect to $\alpha$, using the chain rule and applying the Leibniz rule for differentiating composite functions (justified here by continuity and differentiability of $\Phi$), we get:
\begin{align*}
d\theta^* &= d\alpha - \Phi(\sqrt{\beta}(x^* - \theta^*))\,d\alpha \\
&\quad + (1 - \alpha)\sqrt{\beta}\phi(\sqrt{\beta}(x^* - \theta^*))\, (dx^* - d\theta^*).
\end{align*}
Substituting the earlier result $dx^* = d\theta^*$ into the above yields:
\[
d\theta^* = \left(1 - \Phi(\sqrt{\beta}(x^* - \theta^*))\right)d\alpha.
\]
Dividing through by $d\alpha$ gives the expression:
\[
\frac{d\theta^*}{d\alpha} = 1 - \Phi(\sqrt{\beta}(x^* - \theta^*)).
\]
Since $d\theta^* = dx^*$, it follows that:
\[
\frac{dx^*}{d\alpha} = \frac{d\theta^*}{d\alpha} = 1 - \Phi(\sqrt{\beta}(x^* - \theta^*)),
\]
which proves condition \eqref{eq:comparativestatics}.

Note that this expression is strictly positive because $\Phi(\cdot) \in (0,1)$, hence both thresholds $x^*$ and $\theta^*$ increase in response to a rise in the partisan share $\alpha$. This matches the intuition that a greater presence of committed attackers makes marginal citizens more likely to join the attack, thereby lowering the regime's incentive to hold on.

\paragraph{Derivation of condition \eqref{eq:explicitalpha}: }

We begin by solving for the critical value $\alpha^{*}(\theta; y, z)$ such that the aggregate attack $A(\theta; x^*(y, z; \alpha), \alpha)$ equals the regime strength $\theta$. From equation \eqref{eq:actualattack}, the aggregate attack is given by:
\[
A(\theta; x^*(y,z;\alpha), \alpha) = \alpha + (1 - \alpha)\Phi\left(\sqrt{\beta}(x^*(y,z;\alpha) - \theta)\right).
\]
Setting $A(\theta; \cdot) = \theta$, we obtain:
\begin{equation}
\alpha + (1 - \alpha)\Phi\left(\sqrt{\beta}(x^* - \theta)\right) = \theta.
\end{equation}
Rewriting the left-hand side:
\[
\alpha\left[1 - \Phi\left(\sqrt{\beta}(x^* - \theta)\right)\right] + \Phi\left(\sqrt{\beta}(x^* - \theta)\right) = \theta.
\]
Subtracting the second term from both sides yields:
\[
\alpha\left[1 - \Phi\left(\sqrt{\beta}(x^* - \theta)\right)\right] = \theta - \Phi\left(\sqrt{\beta}(x^* - \theta)\right).
\]
Solving for $\alpha$, we arrive at the expression for the critical partisan threshold:
\begin{equation}
\alpha^*(\theta; y, z) = \frac{\theta - \Phi\left(\sqrt{\beta}(x^* - \theta)\right)}{1 - \Phi\left(\sqrt{\beta}(x^* - \theta)\right)},
\end{equation}
which is condition \eqref{eq:explicitalpha}.

\paragraph{Derivation of condition \eqref{eq:optimalz}: }

From \eqref{eq:explicitalpha}, the opposition leader’s success threshold $\alpha^*$ depends on the citizens’ threshold $x^*$:
\[
\alpha^* = \frac{\theta - \Phi\left(\sqrt{\beta}(x^* - \theta)\right)}{1 - \Phi\left(\sqrt{\beta}(x^* - \theta)\right)}.
\]
By Lemma \ref{lemm:direct}, we have $\partial x^* / \partial z = 1$, so:
\[
\frac{\partial \alpha^*}{\partial z} = \frac{\partial \alpha^*}{\partial x^*} \cdot \frac{\partial x^*}{\partial z} = \frac{\partial \alpha^*}{\partial x^*}.
\]
Differentiating with respect to $x^*$ using the quotient rule and the fact that $\Phi' = \phi$:
\begin{align*}
\frac{\partial \alpha^*}{\partial x^*} &= \frac{ -\sqrt{\beta}\phi\left(\sqrt{\beta}(x^* - \theta)\right)(1 - \Phi(\sqrt{\beta}(x^* - \theta))) + \sqrt{\beta}\phi\left(\sqrt{\beta}(x^* - \theta)\right)(\theta - \Phi(\sqrt{\beta}(x^* - \theta)))}{(1 - \Phi(\sqrt{\beta}(x^* - \theta)))^2} \\
&= -\sqrt{\beta}(1 - \theta)\frac{ \phi(\sqrt{\beta}(x^* - \theta)) }{ \left[1 - \Phi(\sqrt{\beta}(x^* - \theta))\right]^2 }.
\end{align*}

Substituting into the first-order condition for the opposition leader’s payoff:
\[
-\psi z - B \cdot \frac{\partial \alpha^*}{\partial z} = 0,
\]
we find:
\begin{equation}
\psi z = B \cdot \sqrt{\beta}(1 - \theta)\frac{ \phi(\sqrt{\beta}(x^* - \theta)) }{ \left[1 - \Phi(\sqrt{\beta}(x^* - \theta)) \right]^2 },
\end{equation}
so:
\begin{equation}
z^*(\theta) = \frac{B}{\psi} \cdot \sqrt{\beta}(1 - \theta)\frac{ \phi(\sqrt{\beta}(x^* - \theta)) }{ \left[1 - \Phi(\sqrt{\beta}(x^* - \theta)) \right]^2 },
\end{equation}
which is the expression in condition \eqref{eq:optimalz}.

\paragraph{Derivation of condition \eqref{eq:optimalpropaganda}: }

Recall the regime’s expected utility:
\[
\mathbb{E}[U_R] = \int_0^{\alpha^*} \left[ \theta - \left( \alpha + (1 - \alpha)\Phi(\sqrt{\beta}(x^* - \theta)) \right) \right] d\alpha - \frac{y^2}{2}.
\]
Differentiating with respect to $y$, and observing that the integrand vanishes at $\alpha = \alpha^*$ (by the definition of $\alpha^*$), we apply Leibniz’s rule:
\begin{align*}
\frac{d\mathbb{E}[U_R]}{dy} &= \underbrace{\frac{\partial \alpha^*}{\partial y} \cdot \left(\theta - A(\theta; x^*, \alpha^*)\right)}_{=0} \\
&\quad - \int_0^{\alpha^*} \left[ (1 - \alpha)\sqrt{\beta}\phi(\sqrt{\beta}(x^* - \theta)) \cdot \frac{\partial x^*}{\partial y} \right] d\alpha - y.
\end{align*}
By Lemma \ref{lemm:direct}, $\partial x^* / \partial y = -1$, so:
\[
\frac{d\mathbb{E}[U_R]}{dy} = \sqrt{\beta} \phi(\sqrt{\beta}(x^* - \theta)) \int_0^{\alpha^*} (1 - \alpha) d\alpha - y.
\]
Evaluating the integral:
\[
\int_0^{\alpha^*} (1 - \alpha) d\alpha = \alpha^* - \frac{(\alpha^*)^2}{2},
\]
we obtain:
\begin{equation}
y^* = \sqrt{\beta} \phi(\sqrt{\beta}(x^* - \theta)) \cdot \alpha^* \left(1 - \frac{\alpha^*}{2}\right),
\end{equation}
which is condition \eqref{eq:optimalpropaganda}.

\bibliographystyle{apalike}   % or abbrvnat, ecta, etc.
\bibliography{regimechange}

\end{document}